\newsavebox{\measure@tikzpicture}
  \def\tikz@width{#1}%
\newcommand{\defn}{\vcentcolon=}
\newtheorem{theorem}{Theorem}[section]
\newtheoremstyle{mythmstyle}
  {5pt} 
  {\topsep} 
  {} 
  {} 
  {\bfseries} 
  {.} 
  {.5em} 
  {} 
\theoremstyle{mythmstyle} 
\newtheorem{lemma}[theorem]{Lemma}
\newtheorem{corollary}[theorem]{Corollary}
\newtheorem{observation}[theorem]{Observation}
\crefname{proposition}{Proposition}{Propositions}
\crefname{claim}{Claim}{Claims}
\crefname{fact}{Fact}{Facts}
\newtheoremstyle{mydefstyle}
  {5pt} 
  {\topsep} 
  {} 
  {} 
  {\bfseries} 
  {.} 
  {.5em} 
  {} 
\theoremstyle{mydefstyle} 
\newtheorem{definition}[theorem]{Definition}
\newtheorem{construction}[theorem]{Construction}
\theoremstyle{remark}           
\numberwithin{equation}{section}
\definecolor{DarkGreen}{rgb}{0.1,0.5,0.1}
\definecolor{DarkRed}{rgb}{0.5,0.1,0.1}
\definecolor{DarkBlue}{rgb}{0.1,0.1,0.5}
\definecolor{cerulean}{rgb}{0.0, 0.48, 0.65}
\newcommand{\vecc}{\mathbf{c}}
\newcommand{\vece}{\mathbf{e}}
\newcommand{\vecv}{\mathbf{v}}
\newcommand{\vecx}{\mathbf{x}}
\newcommand{\vecy}{\mathbf{y}}
\newcommand{\veczero}{\mathbf{0}}
\newcommand{\cC}{\ensuremath{\mathcal{C}}}
\newcommand{\cM}{\ensuremath{\mathcal{M}}}
\newcommand{\cP}{\ensuremath{\mathcal{P}}}
\newcommand{\mA}{\ensuremath{\mathbf{A}}}
\newcommand{\mD}{\ensuremath{\mathbf{D}}}
\newcommand{\mG}{\ensuremath{\mathbf{G}}}
\newcommand{\mH}{\ensuremath{\mathbf{H}}}
\newcommand{\mM}{\ensuremath{\mathbf{M}}}
\newcommand{\mP}{\ensuremath{\mathbf{P}}}
\newcommand{\mS}{\ensuremath{\mathbf{S}}}
\newcommand{\R}{{\mathbb R}}
\newcommand{\F}{{\mathbb F}}
\newcommand{\N}{\ensuremath{\mathbb{N}}}
\newcommand{\GL}{\ensuremath{\mathcal{GL}}} 
\newcommand{\SP}{\ensuremath{\mathcal{SP}}} 
\newcommand{\class}[1]{\ensuremath{\mathsf{#1}}\xspace}
\newcommand{\CE}{\class{CE}}
\newcommand{\PCE}{\class{PCE}}
\newcommand{\SPCE}{\class{SPCE}}
\newcommand{\LCE}{\class{LCE}}
\newcommand{\LIP}{\class{LIP}}
\newcommand{\NP}{\class{NP}}
\newcommand{\poly}{\mathrm{poly}}
\title{Reductions Between Code Equivalence Problems}
\author{Mahdi Cheraghchi\thanks{University of Michigan --  Ann Arbor. Email: \texttt{mahdich@umich.edu}}, 
Nikhil Shagrithaya\thanks{University of Michigan --  Ann Arbor. Email: \texttt{nshagri@umich.edu}}, 
Alexandra Veliche\thanks{University of Michigan --  Ann Arbor. Email: \texttt{aveliche@umich.edu}}}
\date{}
\begin{document}

\listoffixmes

\maketitle

\begin{abstract}
    In this paper we present two reductions between variants of the Code Equivalence problem. 
    We give polynomial-time Karp reductions from Permutation Code Equivalence (\PCE) to both Linear Code Equivalence (\LCE) and Signed Permutation Code Equivalence (\SPCE). 
    Along with a Karp reduction from \SPCE to the Lattice Isomorphism Problem (\LIP) shown by Bennett and Win (2024), our second result implies a reduction from \PCE to \LIP.
\end{abstract}

\section{Introduction}
\label{sec:intro}

The \emph{Code Equivalence} (\CE) problem asks if two given codes $\cC_1$ and $\cC_2$ are ``equivalent" in some metric-preserving way; variants of the \CE problem specify the type of equivalence.
\emph{Permutation Code Equivalence} (\PCE) asks if the codes are the same up to permutation of the coordinates of codewords, while \emph{Signed Permutation Code Equivalence} (\SPCE) allows equivalence up to signed permutations.
Still more generally, \emph{Linear Code Equivalence} (\LCE) allows an equivalence up to permutation and multiplication by a (non-zero) constant.
The variants of \CE belong to a larger class of isomorphism problems that ask the following question: Given two objects of the same kind, is there an isomorphism that transforms one object into the other?
Other examples of such problems include Matrix Code Equivalence and Graph Isomorphism.

Besides being interesting problems in their own right, \CE problems have many important applications.
Perhaps most notably, the conditional hardness of \CE variants has been used as a security assumption for several cryptographic schemes proposed to be post-quantum.
These include the seminal McEliece public-key encryption scheme \cite{McEliece78}, a recent NIST post-quantum standardization submission called ``Classic McEliece" \cite{ClassicMcEliece}, and the more recent LESS identification scheme \cite{BBPS21,BBPS22}.

Given the relevance of these problems to cryptography, there has been a considerable amount of work on designing efficient algorithms that solve these problems.
Leon~\cite{Leon82} introduced an algorithm for the search version of \PCE that works well for a large number of codes, but still requires exponential time in the worst case.
The Support Splitting Algorithm developed by Sendrier in \cite{Sendrier00} and extended by Sendrier and Simos in \cite{SS13a} gives an algorithm for linear codes that is efficient for codes with small hull, where the hull of a code is defined by the intersection of the code and its dual.
This algorithm, however, does not work for the case where the dimension of the hull is zero, but this case was later handled by Bardet, Otmani, and Saeed-Taha in \cite{BOS19}.
In the latter paper, the authors reduce the problem of deciding \PCE to a weighted version of Graph Isomorphism, and then use a variant of Babai's algorithm for solving the latter problem \cite{Bab16} to give a quasi-polynomial time algorithm that computes \PCE for the zero hull case.

On the other side of the cryptographic coin, considerable attention has been devoted towards understanding the computational hardness of these problems.
Petrank and Roth in \cite{PR97} showed that Graph Isomorphsim reduces to \PCE, and that \PCE is not \NP-complete unless the polynomial hierarchy collapses.
This was used as evidence for the computational hardness of \PCE until Babai introduced his quasi-polynomial time algorithm for deciding Graph Isomorphism \cite{Bab16}.
Even under the assumption that the polynomial hierarchy does not collapse, however, it is still possible that both Graph Isomorphism and \PCE cannot be decided in polynomial time.

The Matrix Code Equivalence problem is concerned with code equivalence in the rank metric.
Couvreur, Debris-Alazard and Gaborit in \cite{CDG11} give a reduction from \LCE to Matrix Code Equivalence, thus showing that determining code equivalence under the rank metric is at least as hard as determining equivalence under the Hamming metric.

In~\cite{SS13a}, Sendrier and Simos give a reduction from \LCE to \PCE that runs in time polynomial in the blocklength $n$ and alphabet size $q$ of the codes.
To do this, they use the \emph{closure} of a code $\cC \subseteq \F_q^n$, which is the set defined by taking every codeword and multiplying each of its $n$ coordinates with all the non-zero field elements to produce a new vector of length $n(q-1)$. 
Then using the fact that multiplication by any non-zero field element induces a permutation on the elements in $\F_q^*$, they reframe scalar multiplication as a permutation over $\F_q^*$.
In \cite{DG23}, Ducas and Gibbons use a specific form of this closure to prove a reduction from \SPCE to \PCE.
Biasse and Micheli in~\cite{BM23} give a search-to-decision reduction for \PCE, which implies that the decision version of \PCE is at least as hard as the search version.
Bennett and Win in \cite{BW24} give several reductions between \CE variants and other isomorphism problems.
They extend the closure technique from \cite{SS13a} to give a reduction from \LCE to \SPCE, and also give a reduction from \SPCE to the \emph{Lattice Isomorphism Problem} (\LIP).

\LIP is an analogous problem to \CE for lattices.
A \emph{lattice} is an infinite discrete additive subgroup of Euclidean space generated by taking all integer linear combinations of a finite set of linearly independent vectors; these vectors form a \emph{basis} for the lattice.
Two lattices are said to be \emph{isomorphic} if there exists an orthogonal transformation that transforms one lattice (basis) into the other.
\LIP asks if such an isomorphism exists.
In their reduction from \SPCE to \LIP, Bennet and Win use a well-known construction (called Construction A) that lifts a linear code over a prime finite field $\F_p$ to a lattice in $\R^n$.
Because this construction is only well-defined for prime fields, their reduction only works for prime fields.
It remains an open problem to find a reduction from any \CE variant to \LIP for non-prime fields.

\subsection{Results}
\label{subsec:results}

Now we state our main results and place them into context of the prior work described above.
In~\cite{BW24}, the authors prove Karp reductions from \LCE to \PCE and from \LCE to \SPCE.
In our work, we show a reverse Karp reduction from \PCE to \LCE.
\medskip

\begin{theorem}[\PCE reduces to \LCE, informal]
    For linear codes with blocklength $n$ over a field of size $q$, there is a Karp reduction from \emph{\PCE} to \emph{\LCE} that runs in $\poly (n, \log q)$ time.
\end{theorem}

The formal statement is given in~\cref{thm:main-PCE->LCE}.
This result, along with the Karp reduction in~\cite{SS13b, BW24} from \LCE to \PCE running in $\poly(n, q)$ time, implies that the problems \LCE and \PCE are computationally equivalent up to factors in the runtime that are polynomial in $n$ and $q$.

Additionally, the Karp reduction from \LCE to \SPCE detailed in~\cite[Theorem 4.4, Corollary 4.5]{BW24} combines with our result to give a reduction from \PCE to \SPCE running in $\poly(n, q)$ time. 
In situations where $q$ is much larger than $n$ (as is the case for Reed-Solomon codes, for example), it is desirable to have the runtime depend only logarithmically on $q$. 
Note that at least $\log q$ time is required to express a single element of the field, so the dependence on $q$ cannot be smaller than $\log q$.
Our second result is a Karp reduction from \PCE to \SPCE whose runtime depends only logarithmically on $q$.
\medskip

\begin{theorem}[\PCE reduces to \SPCE, informal]
    For linear codes  with blocklength $n$ over a field of size $q$, there is a Karp reduction from \emph{\PCE} to \emph{\SPCE} that runs in $\poly (n, \log q)$ time.
\end{theorem}

The formal statement is given in~\cref{thm:main-PCE->SPCE}.
This result, together with a Karp reduction from \SPCE to \PCE from \cite[Lemma 10]{DG23} running in $\poly(n, \log q)$ time, implies that \PCE and \SPCE are computationally equivalent problems up to factors in the runtime that are polynomial in $n$ and $\log q$.

Finally, by combining the above result with the Karp reduction from \SPCE over prime fields to \LIP from~\cite[Theorem 5.1]{BW24}, we obtain the following corollary.

\begin{corollary}[\PCE reduces to \LIP]
    For any prime $p$, there is a Karp reduction from \emph{\PCE} over a field of order $p$ to \emph{\LIP} that runs in $\poly(n, \log p)$ time.
\end{corollary}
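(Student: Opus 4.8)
The plan is to obtain the corollary purely by composition, with no new combinatorial content: chain the reduction from \PCE\ to \SPCE\ (our second main theorem, \cref{thm:main-PCE->SPCE}) with the Bennett--Win reduction from \SPCE\ over a prime field to \LIP\ (\cite[Theorem 5.1]{BW24}). Fix a prime $p$ and an instance $(\cC_1,\cC_2)$ of \PCE\ over $\F_p$ with blocklength $n$. First I would apply the map $R_1$ guaranteed by \cref{thm:main-PCE->SPCE}; this produces in $\poly(n,\log p)$ time an instance $R_1(\cC_1,\cC_2)$ of \SPCE. The key point to verify here is that $R_1$ outputs an \SPCE\ instance \emph{over the same prime field} $\F_p$ (rather than some extension), so that the Bennett--Win reduction is applicable; this is immediate from the construction underlying \cref{thm:main-PCE->SPCE}, which works coordinate-wise over $\F_p$. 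One also records that the output blocklength is $n' = \poly(n)$.

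Next I would apply the map $R_2$ from \cite[Theorem 5.1]{BW24}, which on input an \SPCE\ instance over $\F_p$ of blocklength $n'$ outputs in $\poly(n',\log p)$ time an instance of \LIP. Composing, $R_2\circ R_1$ is the desired reduction. For the runtime, the total cost is $\poly(n,\log p) + \poly(n',\log p)$; since $n' = \poly(n)$, this is $\poly(n,\log p)$, as claimed. (Here I am using the standard fact that the composition of two polynomial-time Karp reductions is again a polynomial-time Karp reduction, together with the size bound on the intermediate instance needed to keep the bound in terms of the original parameters $n,\log p$ rather than $n', \log p$.)

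For correctness I would simply chase the three equivalences: $(\cC_1,\cC_2)$ is a yes-instance of \PCE\ over $\F_p$ iff $R_1(\cC_1,\cC_2)$ is a yes-instance of \SPCE\ over $\F_p$ (by \cref{thm:main-PCE->SPCE}), iff $R_2(R_1(\cC_1,\cC_2))$ is a yes-instance of \LIP\ (by \cite[Theorem 5.1]{BW24}). This gives the biconditional defining a Karp reduction.

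There is no real obstacle here: the corollary is a formal consequence of the second main theorem and an existing result, so the only thing to be careful about is the bookkeeping described above --- namely that the intermediate \SPCE\ instance lives over the same prime field $\F_p$ and has blocklength polynomial in $n$, so that the hypotheses of \cite[Theorem 5.1]{BW24} are met and the composed runtime remains $\poly(n,\log p)$. The substantive work all sits inside \cref{thm:main-PCE->SPCE}.
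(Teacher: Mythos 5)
Your proposal matches the paper's argument exactly: the corollary is obtained by composing the reduction of \cref{thm:main-PCE->SPCE} with the Bennett--Win reduction from \SPCE\ over prime fields to \LIP\ (\cite[Theorem 5.1]{BW24}), with the routine bookkeeping on the intermediate field and blocklength. Your write-up is correct and, if anything, more explicit about the composition details than the paper, which states the corollary as an immediate consequence.
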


\section{Preliminaries}
\label{sec:prelims}

Let $\N$ denote the set of all positive integers. 
For any $n,m \in \N$ such that $n< m$, we use the notation $[n,m]$ to denote the set of integers $\{n,n+1,\ldots,m\}$.
For $n \in \N$, we use the abbreviated notation $[n]\defn[1,n]$.
For any prime power $q\in \N$, we use $\F_q$ to denote the field of size $q$.
For any commutative ring $R$, we denote its multiplicative subgroup by $R^*$; for a field $\F$, we have $\F^* = \F \setminus \{0\}$.

We use boldface lowercase letters, such as $\vecv$, to denote vectors and boldface uppercase letters, such as $\mA$, to denote matrices.
We use $\vece_i$ to denote the column vector containing a one in the $i$-th position and zeros everywhere else, and $\veczero$ will denote the all zeroes column vector.
For any matrix $\mA$ with $n$ columns and $i \in [n]$, we use $\mA[i]$ to denote the $i$-th column.
We also use $\mA[i,j]$ to denote the entry in the $i$-th row and $j$-th column.
To denote the block submatrix of a matrix $\mA$ spanned by rows $r_i$ through $r_j$ (inclusive) and columns $c_k$ through $c_\ell$ (inclusive), we use the notation $\mA[r_i:r_j, c_k:c_\ell]$.

\subsection{Matrix Groups}
\label{subsec:matrix-groups-intro}

For any field $\F$ and $n\in \N$, we use the following notation for special sets of $n \times n$ matrices over $\F$:
$\GL_n(\F)$ denotes the group of invertible matrices,  $\cP_n(\F)$ denotes the set of permutation matrices, $\SP_n(\F)$ denotes the set of signed permutation matrices, and $\cM_n(\F)$ denotes the set of monomial matrices.
A \emph{permutation} matrix $\mP \in \cP_n(\F)$ contains exactly one $1$ in each row and column and $0$s everywhere else.
A \emph{signed permutation} matrix $\mP \in \SP_n(\F)$ contains exactly one non-zero entry in each row and column, but each of these can be either $1$ or $-1$.
A \emph{monomial} matrix $\mM \in \cM_n(\F)$ contains exactly one non-zero entry in each row and column, but these can take values in $\F^*$; any monomial matrix $\mM$ can be written as $\mM = \mD \mP$ for some diagonal matrix $\mD = \text{diag}(d_1,...,d_n)$, where $d_i \in \F^*$, and permutation matrix $\mP \in \cP_n(\F)$.
Each of these sets of matrices forms a group under matrix multiplication.
Furthermore, these satisfy $\cP_n(\F) \subseteq \SP_n(\F) \subseteq \cM_n(\F) \subseteq \GL_n(\F)$.
If it is clear from context, we do not specify the field for these matrix groups.
For any permutation matrix $\mP \in \SP_n$, we denote the corresponding permutation map over the set of column indices by $\sigma_{\mP} \colon [n] \to [n]$.

The following observation will be useful for our reduction in~\cref{sec:proofs}.

\begin{observation}
    \label{rem:S-bijection}
    Any invertible matrix $\mS \in \GL_k(\F)$ induces a bijective map on $\F^k$.
    In particular, for any $\vecx, \vecy \in \F^k$, we have $\mS\vecx = \mS\vecy$ if and only if $\vecx = \vecy$.
    Then when $\mS$ is multiplied by some matrix $\mA$ of the appropriate dimension, it maps identical (distinct) columns in $\mA$ to identical (distinct) columns in $\mS \mA$.
\end{observation}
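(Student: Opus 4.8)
The plan is to verify each of the three assertions in turn, all of which reduce to the existence of a two-sided inverse for $\mS$. First I would note that since $\mS \in \GL_k(\F)$, the matrix $\mS^{-1}$ exists, and the linear map $T_{\mS} \colon \F^k \to \F^k$ defined by $T_{\mS}(\vecx) = \mS\vecx$ has $T_{\mS^{-1}}$ as a two-sided inverse, since $\mS^{-1}(\mS\vecx) = (\mS^{-1}\mS)\vecx = \mI_k \vecx = \vecx$ and likewise $\mS(\mS^{-1}\vecy) = \vecy$. A map with a two-sided inverse is a bijection, which gives the first claim.

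Next, for the ``in particular'' statement, the forward direction follows from injectivity: if $\mS\vecx = \mS\vecy$, then applying $\mS^{-1}$ to both sides gives $\vecx = \vecy$. The reverse direction is immediate, since equal inputs to the (well-defined) function $T_{\mS}$ produce equal outputs. So $\mS\vecx = \mS\vecy$ if and only if $\vecx = \vecy$.

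Finally, for the statement about columns, I would use the identity $(\mS\mA)[i] = \mS(\mA[i])$ for every column index $i$, which is just the definition of matrix multiplication read column-by-column. Then for any two indices $i, j$, we have $(\mS\mA)[i] = (\mS\mA)[j]$ if and only if $\mS(\mA[i]) = \mS(\mA[j])$, which by the previous paragraph holds if and only if $\mA[i] = \mA[j]$. Hence $\mS$ preserves both the pattern of repeated columns and the pattern of distinct columns when acting on $\mA$ by left multiplication.

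I do not expect any real obstacle here: the statement is an elementary consequence of invertibility, and the only point requiring any care is bookkeeping the indexing convention $(\mS\mA)[i] = \mS(\mA[i])$ so that the column-wise claim follows cleanly from the vector-wise claim.
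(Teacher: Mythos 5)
Your proof is correct; the paper states this as an observation without any proof, treating it as immediate, and your argument (two-sided inverse gives bijectivity, injectivity gives the ``if and only if,'' and the column identity $(\mS\mA)[i] = \mS(\mA[i])$ transfers the claim to columns) is exactly the standard justification one would supply.
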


\subsection{Codes}
\label{subsec:code-prelims}

A \emph{linear code} is a finite-dimensional vector space over a finite field.
Let $\F_q$ be a finite field of size $q$ for some prime power $q$.
Using the standard notation for linear codes, we say $\cC \subseteq \F_q^n$ is a \emph{$[n,k,d]_q$-code} for some $q, n, k, d \in \N$, where $q$ is the \emph{alphabet size}, $n$ is the \emph{blocklength} given by the length of the codeword vectors, $k$ is the \emph{dimension} of $\cC$ as a linear subspace of $\F_q^n$, and $d$ is the \emph{minimum distance} given by the minimum Hamming weight over all non-zero codewords of $\cC$.
Any linear code $\cC$ can be expressed as the row span of a \emph{generator} matrix $\mG \in \F_q^{k\times n}$.
Note this is not unique (as elementary row operations do not change the span of $\mG$).

\subsection{Equivalence Problems}
\label{subsec:problem-defs}

Two codes $\cC, \cC' \subseteq \F_q^n$ are said to be \emph{permutation equivalent} if there exists a permutation of the coordinates of $\cC$ that gives $\cC'$.
We formalize this notion and define two variants of this problem below.

\begin{definition}[\PCE]
    \label{def:PCE}
    For $k, n \in \N$ and field $\F_q$ of size $q$, the \emph{Permutation Code Equivalence} problem, denoted by \PCE, is the following decision problem:
    Given a pair of generator matrices $\mG, \mH \in \F_q^{k \times n}$, decide whether there exist an invertible matrix $\mS \in \GL_k(\F_q)$ and a permutation matrix $\mP \in \cP_n(\F_q)$ for which $\mS \mG \mP = \mH$.
\end{definition}

If we allow code equivalence up to \emph{signed} permutations, we can consider the following problem.
\begin{definition}[\SPCE]
    \label{def:SPCE}
    For $k, n \in \N$ and field $\F_q$ of size $q$, the \emph{Signed Permutation Code Equivalence} problem, denoted by \SPCE, is the following decision problem:
    Given a pair of generator matrices $\mG, \mH \in \F_q^{k \times n}$, decide whether there exist an invertible matrix $\mS \in \GL_k(\F_q)$ and a signed permutation matrix $\mP \in \SP_n(\F_q)$ for which $\mS \mG \mP = \mH$.
\end{definition}

More generally, we can consider code equivalence up to permutation and multiplication by any non-zero constant.
\begin{definition}[\LCE]
    \label{def:LCE}
    For $k, n \in \N$ and field $\F_q$ of size $q$, the \emph{Linear Code Equivalence} problem, denoted by \LCE, is the following decision problem:
    Given a pair of generator matrices $\mG, \mH \in \F_q^{k \times n}$, decide whether there exist an invertible matrix $\mS \in \GL_k(\F_q)$ and a monomial matrix $\mM \in \cM_n(\F_q)$ for which $\mS \mG \mM = \mH$.
\end{definition}

For brevity, we say that a pair of matrices $(\mG, \mH)$ is in \PCE (or \SPCE, \LCE) if $\mG$ and $\mH$ satisfy the conditions required for the pair to be a YES instance of the problem.

\section{Reductions from \PCE to \LCE and \SPCE}
\label{sec:proofs}

We formally state our results below and prove them in this section.
All our reductions are deterministic Karp reductions, so we do not specify this in our statements hereafter.
\medskip

\begin{theorem}[\PCE reduces to \LCE]
    \label{thm:main-PCE->LCE}
    There is a reduction from \emph{\PCE} to \emph{\LCE} that runs in $\poly(n, \log q)$ time, where $n$ is the blocklength and $q$ is the field size of the input code pair.
\end{theorem}
\medskip

\begin{theorem}[\PCE reduces to \SPCE]
    \label{thm:main-PCE->SPCE}
    There is a reduction from \emph{\PCE} to \emph{\SPCE} that runs in $\poly(n, \log q)$ time, where $n$ is the blocklength and $q$ is the field size of the input code pair.
\end{theorem}

We will use the following two lemmas to justify why certain assumptions about the input can be made without loss of generality.

\begin{lemma}
    \label{lem:col-freq-PCE}
    Let $k,n \in \N$ and $q$ be a prime power.
    For any matrices $\mG, \mH \in \F_q^{k\times n}$, if $(\mG, \mH)$ is in \emph{\PCE}, then no column of $\mG$ appears in $\mG$ more times than a column of $\mH$ appears in $\mH$.
\end{lemma}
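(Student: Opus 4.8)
The plan is to argue by contraposition on the structure of the permutation action, using the key fact (Observation~\ref{rem:S-bijection}) that left-multiplication by an invertible matrix $\mS$ is a bijection on column space and hence preserves the multiset of columns up to relabeling which vectors appear. Suppose $(\mG,\mH)\in\PCE$, so there exist $\mS\in\GL_k(\F_q)$ and a permutation matrix $\mP\in\cP_n(\F_q)$ with $\mS\mG\mP=\mH$. Fix any column value $\vecv\in\F_q^k$ and consider the set of indices $I=\{i\in[n]:\mG[i]=\vecv\}$ where this value occurs in $\mG$; we want to show the number of columns of $\mH$ equal to some fixed value is at least $|I|$ for an appropriate choice of that value. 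Concretely, I will track what happens to the columns indexed by $I$ under the two operations in turn.

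First I would handle the permutation step: right-multiplication by $\mP$ merely permutes the columns of $\mG$, so the multiset of column vectors of $\mG\mP$ equals the multiset of column vectors of $\mG$; in particular the value $\vecv$ still occurs exactly $|I|$ times in $\mG\mP$, now at the index set $\sigma_{\mP}^{-1}(I)$. Second I would apply Observation~\ref{rem:S-bijection}: since $\mS$ is invertible it sends equal columns to equal columns and distinct columns to distinct columns, so the $|I|$ copies of $\vecv$ in $\mG\mP$ are carried to $|I|$ copies of the single vector $\mS\vecv$ in $\mS\mG\mP=\mH$. Hence the vector $\mS\vecv$ appears in $\mH$ at least $|I|$ times. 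Taking $\vecv$ to be the column value that is most frequent in $\mG$, we conclude that some column of $\mH$ occurs in $\mH$ at least as many times as the most frequent column of $\mG$ occurs in $\mG$, which is exactly the claimed statement.

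There is essentially no hard step here — the argument is a two-line bookkeeping of column multisets through a permutation and an invertible linear map, and both facts needed are already recorded in the preliminaries. The only point requiring a small amount of care is phrasing: the statement is ``no column of $\mG$ appears more times than a column of $\mH$ appears,'' i.e.\ $\max_{\vecv}\#\{i:\mG[i]=\vecv\}\le\max_{\vecw}\#\{j:\mH[j]=\vecw\}$, and one should note that the argument in fact shows the stronger per-value bound that each multiplicity in $\mG$ is matched by the multiplicity of its image under $\mS$ in $\mH$ (so the two column-multiplicity multisets coincide, by applying the same argument to $\mS^{-1}$). I would state the contrapositive-flavored conclusion exactly as in the lemma and leave the symmetric strengthening as a remark if it is needed downstream.
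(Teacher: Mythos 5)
Your proof is correct and follows essentially the same route as the paper's: both track column multiplicities through the permutation (which preserves the column multiset) and through left-multiplication by $\mS$ (which, by Observation~\ref{rem:S-bijection}, sends equal columns to equal columns and distinct to distinct), concluding that multiplicities are preserved. Your remark that the argument actually gives equality of the column-multiplicity multisets is a fine, if unneeded, strengthening.
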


\begin{proof}
    By definition, there must be some $\mS \in \GL_k$ and $\mP \in \cP_n$ such that $\mS \mG \mP = \mH$.
    By \cref{rem:S-bijection}, $\mS$ takes identical columns in $\mG$ to identical columns in $\mS \mG$.
    When multiplied with $\mG$ on the right, $\mP$ only permutes the columns of $\mG$ and so does not change the frequency with which any column appears in the matrix $\mS \mG$.
    Therefore for every column $\vecc$ in $S \mG$, its corresponding column in $\mH$ must appear the same number of times as $\vecc$ appears in $\mS \mG$.
\end{proof}

\begin{lemma}
    \label{lem:no-zero-column}
    For any $\mG, \mH \in \F_q^{k \times n}$, let $\overline{\mG}$ and $\overline{\mH}$ be the submatrices of $\mG$ and $\mH$, respectively, obtained by removing all columns equal to $\veczero \in \F_q^k$.
    Then $(\mG, \mH)$ is in \PCE if and only if $(\overline{\mG}, \overline{\mH})$ is in \PCE.
\end{lemma}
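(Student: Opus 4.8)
The plan is to exploit the fact that an invertible matrix preserves which columns of a matrix are zero: for $\mS \in \GL_k$, the $i$-th column of $\mS\mG$ is $\veczero$ exactly when $\mG[i] = \veczero$, by \cref{rem:S-bijection} together with $\mS\veczero = \veczero$. Since right multiplication by a permutation matrix merely reshuffles columns, any equivalence $\mS\mG\mP = \mH$ with $\mP \in \cP_n$ must send the zero columns of $\mG$ onto the zero columns of $\mH$ and the nonzero columns onto the nonzero columns. The equivalence for $\overline{\mG}$ and $\overline{\mH}$ is then obtained by restricting $\mP$ to the nonzero coordinates, and conversely the equivalence for $\mG$ and $\mH$ by extending a permutation of the nonzero coordinates with an arbitrary matching of the zero coordinates.

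For the forward direction I would start from $\mS\mG\mP = \mH$ and use the observation above to conclude that $\mG$ and $\mH$ have the same number $z$ of zero columns and that $\sigma_{\mP}$ carries the set of indices of nonzero columns of $\mG$ bijectively onto the set of indices of nonzero columns of $\mH$. Relabelling both of these index sets as $[n - z]$ in increasing order turns this restriction of $\sigma_{\mP}$ into a permutation matrix $\overline{\mP} \in \cP_{n-z}$, and a direct check of columns gives $\mS\,\overline{\mG}\,\overline{\mP} = \overline{\mH}$, so $(\overline{\mG}, \overline{\mH}) \in \PCE$.

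For the converse, suppose $(\overline{\mG}, \overline{\mH}) \in \PCE$, witnessed by some $\mS \in \GL_k$ and a permutation matrix on the common column set of $\overline{\mG}$ and $\overline{\mH}$ (these have equally many columns because membership in \PCE\ requires it), so that $\mG$ and $\mH$ also have the same number $z$ of zero columns. I would build $\mP \in \cP_n$ whose restriction to the nonzero-column indices realizes the given permutation and which maps the $\ell$-th zero-column index of $\mG$ to the $\ell$-th zero-column index of $\mH$ for each $\ell \in [z]$. Then $\mS\mG\mP$ matches $\mH$ on the nonzero columns by construction and has $\veczero = \mS\veczero$ in exactly the zero-column positions of $\mH$, so $\mS\mG\mP = \mH$ and $(\mG,\mH) \in \PCE$.

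There is no genuine obstacle here; the only thing requiring care is the indexing bookkeeping when moving between the coordinate sets $[n]$ and $[n - z]$, and checking that the extension in the converse direction is well defined (which uses only that the two matrices have the same number of zero columns, itself forced by the \PCE\ condition on $\overline{\mG}$ and $\overline{\mH}$).
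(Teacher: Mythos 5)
Your proposal is correct and follows essentially the same route as the paper: an invertible $\mS$ fixes $\veczero$ and maps nonzero columns to nonzero columns, so the permutation must match zero columns with zero columns, and one restricts (forward direction) or extends (converse) the permutation accordingly while reusing the same $\mS$. If anything, your write-up is more complete than the paper's sketch, which only outlines the forward direction.
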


This lemma follows from the following trivial reduction:
If $\mG$ and $\mH$ satisfy the \PCE condition, there exist matrices $\mS \in \GL_k$ and $\mP \in \cP_n$ such that $\mS \mG \mP = \mH$.
Because $\mS$ is a linear map, it will always map $\veczero \in \F_q^k$ to itself, so there is an invertible submatrix of $\mS$ and a permutation submatrix of $\mP$ for which $\overline{\mG}$ and $\overline{\mH}$ satisfy the \PCE condition.

As a result of~\cref{lem:no-zero-column} and~\cref{lem:col-freq-PCE}, we can assume without loss of generality that any given input pair of matrices do not contain a zero column and have the same column frequency.

Now we define the construction that will be used to transform the input matrices in our reduction.
\begin{construction}
    \label{con:our-construction}
    Given a $k\times n$ matrix $\mA$ over $\F_q$ with column vectors $\mA[1],\dots,\mA[n] \in \F_q^k$, let $m_\mA$ denote the maximum number of times a column appears in $\mA$.
    Denote $m\defn m_\mA + 1$. 
    We construct the $k\times nm$ matrix $\widehat{\mA}$ by appending $m$ copies of each column vector in order.
    More explicitly, for every $i\in [n]$ and $j\in [m]$, we set $\widehat{\mA}[(i-1)m+j]\defn\mA[i]$.
    Define the $(k+1)\times n$ matrix $\mA_1'$ by appending a row of all ones at the bottom of matrix $\mA$, 
    define the $(k+1)\times nm$ matrix $\mA_2'$ by appending a row of all zeros at the bottom of matrix $\widehat{\mA}$, and 
    define the $(k+1)\times (nm+1)$ matrix $\mA_3'$ by placing ones in every entry of the last row and zeros everywhere else.
    Denoting $n' \defn n + 2nm + 1$, we obtain the final $(k+1) \times n'$ matrix $\mA'$ by concatenating the block matrices $\mA_1'$, $\mA_2'$, and $\mA_3'$:
    \begin{equation*}
        \mA' \defn
        \begin{bmatrix}
            \quad \mA_1' & \mid & \mA_2' & \mid & \mA_3'\quad
        \end{bmatrix} .
    \end{equation*}

    \begin{figure}[H]
        \centering
            \begin{tikzpicture}
              \draw[-] (0,0) -- (8,0) node[right] {};
              \draw[-] (0,1.8) -- (8,1.8) node[right] {};
              \draw[-] (0,0.5) -- (8,0.5) node[right] {};
              
              \draw[-] (0,0) -- (0,1.8) node[right] {};
              \draw[-] (2.1,0) -- (2.1,1.8) node[right] {};
              \draw[-] (4.8,0) -- (4.8,1.8) node[right] {};
              \draw[-] (8,0) -- (8,1.8) node[right] {};
        
              \node[label={$1\ 1\ \dots\ 1$}] at (1,-0.15) {};
              \node[label={$0\ 0\ \dots\dots\ 0$}] at (3.4,-0.15) {};
              \node[label={$1\ 1\ \dots\dots\dots\ 1$}] at (6.4,-0.15) {};
        
              \node[label={$\mA$}] at (1,0.7) {};
              \node[label={$\widehat{\mA}$}] at (3.4,0.7) {};
              \node[label={$\mathbf{0}$}] at (6.4,0.7) {};
            \end{tikzpicture}
        \caption{The matrix $A'$ obtained by \cref{con:our-construction}.}
        \label{fig:matrix-constr}
    \end{figure}

    Note that if the given matrix $\mA$ has full row rank, then $\mA'$ must have full row rank.
\end{construction}
\medskip

With these assumptions and definitions in place, we now prove~\cref{thm:main-PCE->LCE}.


\begin{proof}[Proof of \cref{thm:main-PCE->LCE}:]
    Denote the pair of input matrices by $\mG, \mH \in \F_q^{k\times n}$.
    By~\cref{lem:no-zero-column}, we can assume without loss of generality that $\mG$ and $\mH$ do not contain any zero columns.
    Moreover, we can assume that $\mG$ and $\mH$ have the same row rank, because otherwise the codes represented by these matrices differ in size, and hence cannot be equivalent.
    We can also assume that their row rank is equal to $k$, as otherwise we can consider the submatrices consisting only of linearly independent rows, without loss of generality.
    Let $m_\mG$ denote the maximum number of times a column appears in $\mG$, and define $m_\mH$ similarly.
    By~\cref{lem:col-freq-PCE}, we must have $m_\mG=m_\mH$.
    Denoting $m \defn m_\mG + 1 = m_\mH +1$ and $n' \defn n + 2nm + 1$, we obtain $(k+1) \times n'$ matrices $\mG'$ and $\mH'$, from matrices $\mG$ and $\mH$ respectively, by using~\cref{con:our-construction}.
    Note that because $m\leq n+1$, these matrices can be constructed deterministically in $\poly(n, \log q)$ time.
    The claim in~\cref{thm:main-PCE->LCE} then follows from~\cref{lem:PCE=>PCE} and~\cref{cor:LCE=>PCE}, which are stated and proven below.
\end{proof}

The proof of~\cref{thm:main-PCE->SPCE} is nearly identical to the above proof, but with a restriction to the set of signs $\{-1,+1\}$ instead of all non-zero field elements in the proof of~\cref{cor:LCE=>PCE}.

\subsection{From \PCE to \LCE}
\label{subsec:PCE=>LCE}

First we show the forward direction.
We prove that our construction preserves the permutation equivalence of the input pair, which gives the following stronger result.

\begin{lemma}
    \label{lem:PCE=>PCE}
    For any $\mG, \mH \in \F_q^{k\times n}$, 
    if $(\mG, \mH)$ is in \emph{\PCE}, then $(\mG', \mH')$ is in \emph{\PCE} (and therefore in \emph{\LCE}).
\end{lemma}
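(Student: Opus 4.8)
The plan is to prove this stronger statement by lifting a permutation‑equivalence witness for $(\mG,\mH)$ to one for $(\mG',\mH')$; since $\cP_{n'}(\F_q)\subseteq\cM_{n'}(\F_q)$, such a witness is in particular a monomial one, so $(\mG',\mH')\in\LCE$ will follow immediately. As a preliminary, $(\mG,\mH)\in\PCE$ gives $\mS\mG\mP=\mH$ for some $\mS\in\GL_k(\F_q)$ and $\mP\in\cP_n(\F_q)$; then $\mS^{-1}\mH\mP^{-1}=\mG$ as well, so \PCE\ is symmetric and \cref{lem:col-freq-PCE} applied in both directions forces $m_\mG=m_\mH$. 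Writing $m$ for the common value $m_\mG+1=m_\mH+1$, \cref{con:our-construction} produces $\mG'$ and $\mH'$ of the \emph{same} shape $(k+1)\times n'$ with $n'=n+2nm+1$, so the statement is well-posed.

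I would fix $(\mS,\mP)$ and let $\mS'\in\GL_{k+1}(\F_q)$ be the matrix obtained from $\mS$ by appending a last row and column that are zero except for a $1$ in the bottom‑right corner; it is invertible because $\mS$ is. Left multiplication by $\mS'$ applies $\mS$ column‑wise to the top $k$ rows of $\mG'$ and leaves the last row of each of the three blocks untouched. By \cref{rem:S-bijection}, $\mS$ sends the multiset of columns of $\mG$ bijectively onto that of $\mS\mG$, so $m_{\mS\mG}=m_\mG$ and $\mS\widehat{\mG}$ is exactly $\widehat{\mS\mG}$ (the same $m$ in‑order copies of each column). Hence $\mS'\mG'$ has first block $\left[\begin{smallmatrix}\mS\mG\\\mathbf{1}\end{smallmatrix}\right]$, middle block $\left[\begin{smallmatrix}\widehat{\mS\mG}\\\mathbf{0}\end{smallmatrix}\right]$, and third block equal to that of $\mG'$ (its top $k$ rows are zero, so $\mS'$ acts trivially there).

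It then remains to choose $\mP'\in\cP_{n'}(\F_q)$ block‑diagonally: $\mP$ on the first $n$ coordinates, the identity on the last $nm+1$ coordinates, and on the middle $nm$ coordinates the permutation $\mR$ determined by $\widehat{\mA}\,\mR=\widehat{\mA\mP}$ for every $k\times n$ matrix $\mA$. The only point requiring care is that such an $\mR$ exists and depends only on $\mP$ (and $n,m$): since $\mP$ merely permutes columns, the $i$‑th column of $\mA\mP$ is the $\tau(i)$‑th column of $\mA$ for a permutation $\tau$ of $[n]$ determined by $\mP$ alone, and then $(i-1)m+j\mapsto(\tau(i)-1)m+j$ (for $i\in[n]$, $j\in[m]$) is a bijection of $[nm]$, hence a permutation matrix $\mR$ with the stated property. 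With this $\mP'$, right multiplication (acting blockwise) sends $\mS'\mG'$ to the matrix whose first block is $\left[\begin{smallmatrix}\mS\mG\mP\\\mathbf{1}\mP\end{smallmatrix}\right]=\left[\begin{smallmatrix}\mH\\\mathbf{1}\end{smallmatrix}\right]$, whose middle block is $\left[\begin{smallmatrix}\widehat{\mS\mG}\,\mR\\\mathbf{0}\end{smallmatrix}\right]=\left[\begin{smallmatrix}\widehat{\mS\mG\mP}\\\mathbf{0}\end{smallmatrix}\right]=\left[\begin{smallmatrix}\widehat{\mH}\\\mathbf{0}\end{smallmatrix}\right]$, and whose third block is unchanged; these are precisely the three blocks of $\mH'$ (the third block depends only on $k,n,m$, which $\mG$ and $\mH$ share). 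Thus $\mS'\mG'\mP'=\mH'$, proving $(\mG',\mH')\in\PCE$ and hence $(\mG',\mH')\in\LCE$. The main obstacle is exactly the bookkeeping around $\widehat{\cdot}$: that a column permutation of $\mA$ induces a genuine column permutation of $\widehat{\mA}$ by an index map independent of $\mA$, and that left multiplication by $\mS$ preserves the maximal column multiplicity; everything else is a routine block‑matrix computation.
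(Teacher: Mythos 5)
Your proposal is correct and follows essentially the same route as the paper: the same block-diagonal lift of $\mS$ to $\mS'$, the same index map $(i-1)m+j\mapsto(\sigma_\mP(i)-1)m+j$ on the middle block, and identity on the last block, with the verification done blockwise rather than column-by-column. The extra care you take with $m_\mG=m_\mH$ and $\mS\widehat{\mG}=\widehat{\mS\mG}$ matches observations the paper makes in the surrounding text.
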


\begin{proof}
    By definition, $\mG, \mH$ is in \PCE if and only if there exist an invertible matrix $\mS \in \GL_k$ and a permutation matrix $\mP \in \cP_n$ such that $\mS \mG \mP = \mH$.
    From these matrices we will construct $\mS' \in \GL_{k+1}$ and $\mP' \in \cP_{n'}$ such that $\mS' \mG' \mP' = \mH'$.
    Define $\mS' \in \GL_{k+1}$ such that the top-left block submatrix is $\mS'[1:k, 1:k] = \mS$, the bottom-right entry is $\mS'[k+1, k+1] = 1$, and all other entries are zero.

    Let $\sigma_\mP \colon [n] \to [n]$ be the permutation map described by $\mP$.
    Define a new permutation $\sigma_{\mP'} \colon [n'] \to [n']$ that permutes the first $n$ columns of $\mG'$ among each other according to $P$, i.e. $\sigma_{\mP'}(x) \defn \sigma_\mP(x)$ for all $x \in [n]$. 
    We also define $\sigma_{\mP'}$ such that it permutes each of the $nm$ columns in the second block according to how its corresponding column was permuted in the first block.
    Formally, for every $x \in [n+1, n+nm]$, we write $x = n + m(i-1) + j$ uniquely for $i \in [n]$ and $j \in [m]$ and set $\sigma_{\mP'}(x) \defn n+m(\sigma_\mP(i)-1)+j$.
    Lastly, we define $\sigma_{\mP'}$ such that it sends each of the last $nm+1$ columns identically to itself, i.e. $\sigma_{\mP'}(x) \defn x$ for every $x \in [n+nm+1, n+2nm+1]$.
    Finally we let $\mP' \in \cP_{n'}$ be the permutation matrix defined by $\sigma_{\mP'}$.

    We claim that $\mS' \mG' \mP' = \mH'$.
    This holds if and only if $\mS' \mG' \mP' [x] = \mH'[x]$ for every column $x\in [n']$.
    We show that this is the case for every column in each of the three blocks.
    For every $x \in [n]$, by the definition of $\sigma_{\mP'}$, the fact that each column in the first block has 1 as the last entry, and the \PCE assumption,
    \begin{equation*}
        \mS' \mG' \mP' [x] 
        = \mS' \mG'[\sigma_{\mP'}(x)]
        = \mS \mG[\sigma_\mP(x)]\ \|\ 1
        = \mS \mG[x] \mP\ \|\ 1
        = \mS \mG \mP [x]\ \|\ 1 
        = \mH [x]\ \|\ 1
        = \mH' [x].
    \end{equation*}
    Here ``$\|$" denotes concatenation of an additional entry at the end of the column vector.
    For $x \in [n+1, n+nm]$, we can write $x = n + m(i-1) + j$ for some $i \in [n]$ and $j \in [m]$.
    Then by the definition of $\sigma_{\mP'}$, the fact that each column in the second block has 0 as the last entry, and the \PCE assumption,
    \begin{align*}
        \mS' \mG' \mP' [x] 
        &= \mS \mG[\sigma_{\mP'}(x)]\ \|\ 0
        = \mS \mG[n+m(\sigma_\mP(i)-1)+j]\ \|\ 0
        = \mS \mG[\sigma_\mP(i)]\ \|\ 0 \\
        &= \mS \mG \mP [i]\ \|\ 0
        = \mH [i]\ \|\ 0
        = \mH [n+m(i-1)+j]\ \|\ 0
        = \mH' [x].
    \end{align*}
    Lastly, for all $x \in [n+nm+1,n+2nm+1]$, by definition of $\sigma_{\mP'}$, the fact that each column in the last block is identical to $\vece_{k+1}$, and the \PCE assumption,
    \begin{equation*}
        \mS' \mG' \mP' [x] 
        = \mS \mG[\sigma_{\mP'}(x)]\ \|\ 1
        = \mS \mG[x]\ \|\ 1
        = \mS \mG \mP [x]\ \|\ 1
        = \mH[x]\ \|\ 1
        = \mH'[x].
    \end{equation*}
    Therefore, $(\mG', \mH')$ is in \PCE.
\end{proof}

Because the matrix groups are nested $\cP_n \subseteq \SP_n \subseteq \cM_n$, \cref{lem:PCE=>PCE} immediately implies the forward direction of both statements~\cref{thm:main-PCE->LCE} and~\cref{thm:main-PCE->SPCE}.



\subsection{From \LCE to \PCE}
\label{subsec:LCE=>PCE}

Now we prove the other direction, and show that if the constructed matrix pair $(\mG', \mH')$ is in \LCE, then the original matrix pair $(\mG, \mH)$ must be in \PCE.
This requires some careful analysis of how the change-of-basis matrix $\mS$ and monomial matrix $\mM$ affect each block of the matrices $\mG'$ and $\mH'$.

First we show that for any monomial matrix $\mM = \mD\mP$ for which $(\mG', \mH')$ is in \LCE, the permutation matrix $\mP$ must respect the ``boundaries" between the block matrices of $\mG'$.

\begin{lemma}
    \label{lem:P'-structure}
    For any $\mG, \mH \in \F_q^{k\times n}$,
    if $\mS' \mG' \mM' = \mH'$ for some $\mS' \in \GL_{k+1}$ and $\mM'=\mD' \mP' \in \cM_{n'}$, where $\mP' \in \cP_{n'}$,
    the corresponding permutation map $\sigma_{\mP'} \colon [n']\to [n']$ satisfies the following:
    \begin{enumerate}[(i)]
        \item For every $i \in [1,n]$, $\sigma_{\mP'}(i) \in [1,n]$.
        \item For every $i \in [n+1,n+mn]$, $\sigma_{\mP'}(i) \in [n+1,n+mn]$.
        \item For every $i \in [n+nm+1,n']$, $\sigma_{\mP'}(i) \in [n+mn+1,n']$. 
    \end{enumerate}
\end{lemma}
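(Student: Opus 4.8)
The plan is to distinguish the three blocks of $\mG'$ by invariants that a monomial transformation $\mS'\mG'\mM'=\mH'$ must preserve, and then argue that $\sigma_{\mP'}$ cannot mix them. The key structural features to exploit are: (1) the last row of $\mG'$ (and $\mH'$) is zero exactly on the $nm$ columns of the middle block and nonzero ($=1$) on the other $n+nm+1$ columns; (2) within the right block all $nm+1$ columns equal $\vece_{k+1}$, whereas in the first block each column has nonzero entries in the first $k$ coordinates (because, by the WLOG assumptions carried over from \cref{lem:no-zero-column}, $\mG$ has no zero column); (3) multiplicities: in $\mG'$ every column of the first block has been duplicated $m$ times in the middle block, so the middle block witnesses much higher column-frequencies than the first or last block after the bottom row is taken into account.

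First I would record how $\mS'$ and $\mM'=\mD'\mP'$ act: writing $\mS' = \begin{bmatrix}\mS & \vecv\\ \vecw^\top & s\end{bmatrix}$, the last row of $\mS'\mG'$ is $\vecw^\top \mG'[1{:}k,\cdot] + s\cdot(\text{last row of }\mG')$. Since $\mH'$ has a very specific zero-pattern in its last row, and $\mG'$ likewise, I want to conclude that $\mS'$ must in fact be block-lower/upper triangular in the relevant sense — or at least that the zero-set of the last row of $\mS'\mG'$ already coincides (as a multiset of column types) with that of $\mG'$. The cleanest route: the map $\mM'$ permutes-and-scales columns, so the last-row entry of column $i$ of $\mS'\mG'\mM'$ is $d'_i$ times the last-row entry of column $\sigma_{\mP'}(i)$ of $\mS'\mG'$. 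For this to match $\mH'$'s last row (zero exactly on the middle $nm$ positions), I need the last row of $\mS'\mG'$ to itself be zero exactly on the middle block; this pins down $\vecw$ and $s$ (one shows $\vecw^\top$ must annihilate all columns of $\widehat{\mG}$, hence — since those columns span the same space as the columns of $\mG$, which has full row rank $k$ — we get $\vecw=\veczero$, and then $s\neq 0$). Once the last row of $\mS'\mG'$ is known to vanish precisely on the middle block, conclusion (ii) is immediate: $\sigma_{\mP'}$ maps the zero-set of the last row of $\mH'$ bijectively onto the zero-set of the last row of $\mS'\mG'$, i.e. $[n+1,n+mn]\to[n+1,n+mn]$.

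It remains to separate the first block from the last block, i.e. to rule out $\sigma_{\mP'}$ swapping indices between $[1,n]$ and $[n+nm+1,n']$. Here I would use feature (2) together with the action of $\mS'$: having shown $\mS'=\begin{bmatrix}\mS & \vecv\\ \veczero^\top & s\end{bmatrix}$ with $\mS\in\GL_k$ and $s\in\F_q^*$, the columns of $\mS'\mG'$ lying in the right block all equal $s\,\vece_{k+1}$ (a scalar multiple of $\vece_{k+1}$), so after applying $\mM'$ these give exactly the columns of $\mH'$ that are scalar multiples of $\vece_{k+1}$ — and I claim those are precisely the $nm+1$ columns of $\mH'$'s right block. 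For a first-block column $\mH'[x]$ with $x\in[1,n]$ this requires $\mH[x]=\veczero$, contradicting the no-zero-column assumption on $\mH$; for a middle-block column it has last entry $0\neq s$, excluded. Hence the right block of $\mS'\mG'$ can only be routed to the right block of $\mH'$, giving (iii), and then (i) follows by elimination since $\sigma_{\mP'}$ is a bijection on $[n']$ and the other two blocks are already accounted for.

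The main obstacle I anticipate is the first step: rigorously forcing $\vecw=\veczero$ (equivalently, that the bottom row of $\mS'$ does not "leak" nonzero entries from the first $k$ rows into the last-row pattern). This needs the full-row-rank hypothesis on $\mG$ and the observation (\cref{rem:S-bijection}-style) that the middle block's columns span all of $\F_q^k$ in the first $k$ coordinates; any column $i$ in the middle block with $\vecw^\top\widehat{\mG}[i]\neq 0$ would put a nonzero entry into the last row of $\mS'\mG'$ at a middle-block position, which after scaling by $d'_{\sigma_{\mP'}^{-1}(i)}\neq 0$ survives into $\mH'$ — but $\mH'$'s last row is zero there, a contradiction, provided $\sigma_{\mP'}$ maps that middle position to a middle position, which is a mild circularity I would break by first proving the weaker statement "the zero-set of the last row is $\sigma_{\mP'}$-invariant as a set" purely from counting (the last row of $\mH'$ has exactly $nm$ zeros and $n+nm+1$ nonzeros, and the same for $\mS'\mG'$ only if $\vecw=\veczero$; conversely a nonzero $\vecw$ strictly decreases the number of zeros, which $\mM'$ cannot restore). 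I would make that counting argument the linchpin.
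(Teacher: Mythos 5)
Your route is genuinely different from the paper's and the first two stages of it are sound. The paper never touches the entries of $\mS'$ in this lemma; it argues purely with column multiplicities (the only column of $\mG'$ occurring $nm+1$ times is $\vece_{k+1}$, which pins down the third block, and a ``fewer than $m$ copies'' count then pins down the second). You instead pin down the last row of $\mS'$ first, and your counting argument for that is correct and clean: with $\mS'=\begin{bmatrix}\mS & \vecv\\ \vecw^\top & s\end{bmatrix}$, the number of zeros in the last row of $\mS'\mG'$ is $(nm+1)[s=0]+m\,z_0+z_{-s}$ with $z_0=|\{i:\vecw^\top\mG[i]=0\}|$, $z_{-s}=|\{i:\vecw^\top\mG[i]=-s\}|$, and since a monomial matrix preserves the zero count of a row, this must equal $nm$; the case $s=0$ overshoots, and for $s\neq 0$ the bound $mz_0+z_{-s}\le (m-1)z_0+n$ with $m\ge 2$ forces $z_0=n$, hence $\vecw^\top\mG=\veczero^\top$ and $\vecw=\veczero$ by full row rank. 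Part (ii) then does follow immediately from the zero-pattern of the last row. As a bonus, this establishes parts (ii)--(iii) of \cref{lem:S'-structure} before, rather than after, the present lemma.

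There is, however, a genuine gap in your separation of the first and third blocks. You assert that the right-block columns of $\mS'\mG'$ all equal $s\,\vece_{k+1}$; this presupposes $\vecv=\veczero$, i.e.\ part (i) of \cref{lem:S'-structure}, which you have not proved (and which the paper only derives \emph{after} the present lemma, using \cref{cor:M'-structure}). What you actually know at this point is only that these $nm+1$ columns all equal $(\vecv^\top\ \|\ s)^\top$ with $s\neq 0$. Your contradiction for a first-block target $x$ then yields $\mH[x]=c\,\vecv$ for some $c\neq 0$, not $\mH[x]=\veczero$, so the no-zero-column hypothesis is not violated and the argument stalls. The gap is repairable with one more counting step: the $nm+1$ right-block columns of $\mG'$ cannot all land in the first block of $\mH'$ (which has only $n<nm+1$ columns) and by part (ii) cannot land in the middle block, so at least one lands in the right block of $\mH'$; matching $\vece_{k+1}=c(\vecv^\top\ \|\ s)^\top$ there forces $\vecv=\veczero$, after which your original argument excludes first-block targets and parts (iii) and (i) follow. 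As written, though, the step is unjustified.
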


\begin{proof}
    As a result of~\cref{lem:no-zero-column}, we can assume without loss of generality that $\mG$ and $\mH$ do not contain a zero column.
    Suppose $\mS' \in \GL_{k+1}$ and $\mM' = \mD' \mP' \in \cM_{n'}$ satisfy $\mS' \mG' \mM' = \mH'$.
    By definition, $\mS'$ has an inverse $\mS'^{-1} \in \GL_{k+1}$, so we can rewrite
    \begin{equation}
        \label{eq:LCE-equation}
         \mG' \mM' = \mS'^{-1} \mH' .
    \end{equation}
    By~\cref{rem:S-bijection}, $\mS'^{-1}$ maps identical columns in $\mH'$ to identical columns in $\mS'^{-1}\mH'$.
    To prove each part of the claim, we analyze the effect of $\mS'^{-1}$ on each block $\mH'_1, \mH'_2, \mH'_3$ of matrix $\mH'$.

    First we prove part (iii).
    By construction, the last $nm+1$ columns of $\mH'$, contained in $\mH'_3$ are identical to $\vece_{k+1}$.
    By assumption, $\mH$ does not contain a zero column, so no other column in $\mH'$ outside of the block $\mH'_3$ is equal to $\vece_{k+1}$.
    Then each of the $nm+1$ columns in $\mS'^{-1}\mH'_3$, and no other column of $\mS'^{-1}\mH'$, is equal to $\mS'^{-1}\vece_{k+1}$.

    By~\cref{eq:LCE-equation}, the last $nm+1$ columns of $\mG' \mM'$ must be identical and equal to $\mS'^{-1}\vece_{k+1}$.
    Since $\mM' = \mD' \mP'$ for some diagonal matrix $\mD'$ and permutation $\mP'$,
    it maps each column in $\mG'$ to a (possibly) scaled permutation of that column in $\mG' \mM'$.
    Thus, the matrix $\mM'$ must scale and permute the columns of $\mG'$ to produce $nm+1$ identical columns in the last block of $\mG' \mM'$.
    We claim that multiplying by $\mM'$ cannot cause the number of copies of a column in $\mG'$ to increase in $\mG' \mM'$.
    In particular, no two columns from different blocks of $\mG'$ can be scaled to produce identical columns in $\mG' \mM'$.
    By construction, the last row of $\mG'$ ensures that no column of $\mG'_1$ can be scaled to produce a column of $\mG'_2$, so $\mM'$ cannot map columns from $\mG'_1$ and $\mG'_2$ to identical columns in $\mG'\mM'$.
    Additionally, $\mM'$ maps every $\vece_{k+1}$ column in $\mG'_3$ to a scaling of $\vece_{k+1}$ in $\mG' \mM'$, 
    and since no column in $\mG'_1$ or $\mG'_2$ can be scaled to produce a multiple of $\vece_{k+1}$, all scalings of $\vece_{k+1}$ in $\mG' \mM'$ can only come from $\mG'_3$.
    The only column in $\mG'$ that appears $nm+1$ times is $\vece_{k+1}$ from $\mG'_3$.
    In this way, $\mM'$ can only produce $nm+1$ identical columns in the last block of $\mG' \mM'$ by mapping from the $nm+1$ columns of $\mG'_3$.
    Therefore, $\sigma_{\mP'}(i)$ maps every column with index in $[n+mn+1,n']$ to a column with index in $[n+mn+1,n']$.

    Next we show part (ii).
    By part (iii), any column with index $i \in [n+1,n+mn]$ is permuted to a column with index $\sigma_{\mP'}(i) \leq n+mn$, so it is enough to show that $\sigma_{\mP'}(i) \geq n+1$.
    Suppose for the sake of contradiction that there is a column with index $i \in [n+1,n+nm]$ in $\mG'$ that is mapped to a column with index $\sigma_{\mP'}(i) \in [1,n]$.
    By design, the last entry of the columns in $\mG'_1$ is 1 and differs from the last entry 0 of the columns in $\mG'_2$, so each column in $\mG'_1$ appears strictly less than $m$ times.
    Then, if $\mM'$ permutes any number of columns from $\mG'_1$ with the same number of columns in $\mG'_2$, there will be some column in the second block of $\mG' \mM'$ that appears less than $m$ times.
    But by~\cref{rem:S-bijection}, every column in $\mS'^{-1} \mH'_2$ appears at least $m$ times, and since the second block of $\mG' \mP'$ must be equal to $\mS'^{-1} \mH'_2$, this gives a contradiction.

    Finally, part (i) follows immediately from parts (ii) and (iii).
\end{proof}

This result gives the following corollary.
\begin{corollary}
    \label{cor:M'-structure}
    For any $\mG', \mH', \mS', \mM'$ as in~\cref{lem:P'-structure}, the monomial matrix $\mM'$
    is comprised of three block matrices $\mM_1 \in \cM_n$, $\mM_2 \in \cM_{nm}$, $\mM_3 \in \cM_{nm+1}$, such that $\mM_1$, $\mM_2$, and $\mM_3$ only act on the first $n$, next $nm$, and last $nm+1$ columns of $\mG'$, respectively.
\end{corollary}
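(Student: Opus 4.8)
The plan is to obtain the block structure of $\mM'$ by reading it off directly from \cref{lem:P'-structure}. Write $A_1 := [1,n]$, $A_2 := [n+1,n+nm]$, and $A_3 := [n+nm+1,n']$; these three intervals partition $[n']$ and have sizes $n$, $nm$, and $nm+1$, which are exactly the column counts of the three blocks $\mG'_1$, $\mG'_2$, $\mG'_3$ of $\mG'$ coming from \cref{con:our-construction}.

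First I would upgrade the ``maps into'' statements of \cref{lem:P'-structure} to ``maps onto''. Since $\sigma_{\mP'}$ is a permutation of $[n']$ it is injective, and each $A_\ell$ is finite with $\sigma_{\mP'}(A_\ell) \subseteq A_\ell$, so injectivity forces $\sigma_{\mP'}(A_\ell) = A_\ell$. Hence each restriction $\sigma_{\mP'}|_{A_\ell}$ is a permutation of $A_\ell$, and when the rows and columns of $\mP'$ are grouped according to the partition $A_1, A_2, A_3$, the matrix $\mP'$ is block diagonal, say $\mP' = \text{diag}(\mP_1, \mP_2, \mP_3)$ with $\mP_\ell \in \cP_{|A_\ell|}$.

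Next I would observe that the diagonal factor $\mD'$ in $\mM' = \mD'\mP'$ is (trivially) block diagonal with respect to the same partition, $\mD' = \text{diag}(\mD_1, \mD_2, \mD_3)$, so $\mM' = \text{diag}(\mD_1\mP_1, \mD_2\mP_2, \mD_3\mP_3)$. Setting $\mM_\ell := \mD_\ell\mP_\ell$, each $\mM_\ell$ is a monomial matrix of size $|A_\ell|$, i.e. $\mM_1 \in \cM_n$, $\mM_2 \in \cM_{nm}$, $\mM_3 \in \cM_{nm+1}$. (Equivalently: the unique nonzero entry in column $j$ of $\mM'$ lies in row $\sigma_{\mP'}(j)$, which sits in the same block as $j$, so $\mM'$ is block diagonal and each diagonal block inherits the one-nonzero-entry-per-row-and-column property.) Finally, since right multiplication by a block-diagonal matrix acts block-wise on columns, $\mG'\mM' = [\,\mG'_1\mM_1 \mid \mG'_2\mM_2 \mid \mG'_3\mM_3\,]$, which is exactly the assertion that $\mM_1, \mM_2, \mM_3$ act only on the first $n$, the next $nm$, and the last $nm+1$ columns of $\mG'$, respectively.

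There is essentially no obstacle here: all the substantive work is already contained in \cref{lem:P'-structure}, and this corollary merely repackages its conclusion. The only points that need a moment of care are the elementary passage from ``into'' to ``onto'' (finiteness plus injectivity of $\sigma_{\mP'}$) and the observation that multiplying by the diagonal part $\mD'$ cannot disturb the block-diagonal shape inherited from the permutation part $\mP'$.
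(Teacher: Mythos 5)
Your proof is correct and matches the paper's intent: the paper states this corollary as an immediate consequence of \cref{lem:P'-structure} without further argument, and your write-up simply fills in the routine details (into implies onto by injectivity and finiteness, block-diagonality of $\mP'$ and hence of $\mM' = \mD'\mP'$, and blockwise action on columns).
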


\begin{figure}[H]
    \centering
    \begin{tikzpicture}
        \draw[-] (0,0) -- (4,0) node[right] {};
        \draw[-] (0,4) -- (4,4) node[right] {};
        \draw[-] (0,0) -- (0,4) node[right] {};
        \draw[-] (4,0) -- (4,4) node[right] {};
        \coordinate (w1) at (1,4);
        \coordinate (h1) at (0,3);
        \coordinate (o2) at (1,3);
        \draw[-] (w1) -- (o2) node[right] {};
        \draw[-] (h1) -- (o2) node[right] {};
        \coordinate (w2) at (2.25,3);
        \coordinate (h2) at (1,1.75);
        \coordinate (o3) at (2.25,1.75);
        \draw[-] (o2) -- (w2) node[right] {};
        \draw[-] (o2) -- (h2) node[right] {};
        \draw[-] (w2) -- (o3) node[right] {};
        \draw[-] (h2) -- (o3) node[right] {};
        \coordinate (w3) at (4,1.75);
        \coordinate (h3) at (2.25,0);
        \draw[-] (o3) -- (w3) node[right] {};
        \draw[-] (o3) -- (h3) node[right] {};
        \node[label={$\mM_1$}] at (0.5,3.05) {};
        \node[label={$\mM_2$}] at (1.6,1.9) {};
        \node[label={$\mM_3$}] at (3.1,0.4) {};
    \end{tikzpicture}
    \caption{The structure of matrix $\mM'$.}
    \label{fig:M'-structure}
\end{figure}

Now, we use this lemma to show that any invertible matrix for which $(\mG', \mH')$ is in \LCE must contain only zeros in the last row and last column, except for the last entry.
\begin{lemma}
    \label{lem:S'-structure}
    For any $\mG', \mH', \mS', \mM'$ as in~\cref{lem:P'-structure}, 
    the change-of-basis matrix $\mS'$ satisfies the following properties:
    \begin{enumerate}[(i)]
        \item The last column of $\mS'$ contains zeros in the first $k$ entries.
        \item The last row of $\mS'$ contains zeros in the first $k$ entries.
        \item The entry $\mS'[k+1,k+1]$ is non-zero.
    \end{enumerate}
\end{lemma}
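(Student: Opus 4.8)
The plan is to exploit the block structure of $\mM'$ established in \cref{cor:M'-structure} together with the equation $\mG' \mM' = \mS'^{-1} \mH'$ from \cref{eq:LCE-equation}, and to track the last row of both sides. Recall that by construction the last row of $\mG'$ is $(1,\dots,1 \mid 0,\dots,0 \mid 1,\dots,1)$ (ones on the $\mG'_1$ and $\mG'_3$ blocks, zeros on $\mG'_2$), and similarly for $\mH'$. The key idea is that $\mS'^{-1}$ acts on $\mH'$ by left multiplication, so the last row of $\mS'^{-1}\mH'$ is a \emph{fixed linear combination} of \emph{all} rows of $\mH'$, with coefficients given by the last row of $\mS'^{-1}$; meanwhile $\mM'$ acts on the right, only scaling and permuting columns within blocks, so the last row of $\mG'\mM'$ is obtained from the last row of $\mG'$ by the same column operations. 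Comparing these two descriptions of the same row — exploiting that the $\mG'_2$ block of $\mG'$ has an all-zero last row while the $\mG'_3$ block is all $\vece_{k+1}$'s — will pin down the relevant entries of $\mS'^{-1}$, and hence of $\mS'$.

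More concretely, I would first prove statement (ii) and (iii) together by examining the last row. Write $\vecr = \mS'^{-1}[k+1, :]$ for the last row of $\mS'^{-1}$. On the $\mG'_3$ block, every column of $\mG'\mM'$ equals a nonzero scalar times $\mS'^{-1}\vece_{k+1}$, whose last entry is $\mS'^{-1}[k+1,k+1]$; on the other hand this block of $\mG'\mM'$ comes from scaling the all-ones last-row entries of $\mG'_3$, so these entries are nonzero, forcing $\mS'^{-1}[k+1,k+1]\neq 0$. Now look at the $\mG'_2$ block: the last row of $\mG'$ is zero there, so after right-multiplication by the monomial block $\mM_2$ the last row of $\mG'\mM'$ is still zero on these $nm$ columns; but the corresponding columns of $\mS'^{-1}\mH'$ are (by \cref{lem:P'-structure}) exactly $\mS'^{-1}$ applied to the columns of $\mH'_2$, i.e. $\mS'^{-1}\bigl(\widehat{\mH}[i] \,\|\, 0\bigr) = \mS'^{-1}[:,1{:}k]\,\widehat{\mH}[i]$, whose last entry is $\vecr[1{:}k]\cdot \widehat{\mH}[i]$. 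Since $\mG$ (hence $\widehat{\mH}$, by \cref{lem:equiv-rank}, assuming full row rank) spans all of $\F_q^k$ through its columns... — actually it suffices that the columns of $\widehat{\mH}$ span $\F_q^k$, which holds because $\mH$ has full row rank — we conclude $\vecr[1{:}k] = \veczero$, i.e. the last row of $\mS'^{-1}$ is $(0,\dots,0,\ast)$ with $\ast\neq 0$. This is exactly the structure needed: a block-triangular inverse forces $\mS'$ to have the same shape, giving (ii) and the non-vanishing of $\mS'[k+1,k+1]$, which is (iii).

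For statement (i), I would run the symmetric argument on columns, or equivalently apply the same reasoning to $\mM'^{-1}\mG'^{-1}$-style identity $\mH'\mM'^{-1} = \mS'\mG'$... more cleanly: from $\mS'\mG'\mM' = \mH'$ we get $\mS'\mG' = \mH'\mM'^{-1}$, and $\mM'^{-1}$ is again block-monomial by \cref{cor:M'-structure} (inverses of monomial matrices are monomial and the block structure is preserved). Looking at the last row of $\mH'\mM'^{-1}$ on the $\mG'_3$/$\mH'_3$ block, it is a scaling/permutation of the all-ones row and hence nowhere zero on those $nm+1$ columns; on the left, the last row of $\mS'\mG'$ restricted to the $\mG'_3$ columns is $\mS'[k+1,:]\cdot \vece_{k+1} = \mS'[k+1,k+1]$ times the all-ones (up to the permutation inside block 3), which is consistent, but looking instead at the $\mG'_2$ columns: the right side $\mH'\mM'^{-1}$ has last-row entries $0$ there (last row of $\mH'_2$ is zero), while the left side has last-row entries equal to $\mS'[k+1,:]\,\mG'[:,j] = \mS'[k+1,1{:}k]\cdot \widehat{\mG}[i]$ on those columns; since the columns of $\widehat{\mG}$ span $\F_q^k$, we get $\mS'[k+1,1{:}k] = \veczero$, which re-derives (ii). To get (i), instead track the \emph{first $k$ coordinates} of a single column from block $\mG'_3$ versus $\mH'_3$: $\mH'[:,\ell] = \vece_{k+1}$ for $\ell$ in block 3, and $\mS'\mG'[:,j] = \mS'\vece_{k+1} = \mS'[:,k+1]$ when $j$ is in block 3 (since $\mG'[:,j] = \vece_{k+1}$ there), so $\mS'[:,k+1]$ equals a nonzero scalar times $\vece_{k+1}$ (the scalar being the monomial coefficient from $\mM'^{-1}$), whence the first $k$ entries of $\mS'[:,k+1]$ vanish — which is (i) — and its last entry is nonzero, re-confirming (iii).

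The main obstacle I anticipate is the spanning argument: I need that the columns of $\widehat{\mG}$ (equivalently $\mG$, equivalently $\widehat{\mH}$) span $\F_q^k$, in order to pass from "$\vecr[1{:}k]\cdot \mG[:,i] = 0$ for all $i$" to "$\vecr[1{:}k]=\veczero$." This is precisely where the standing assumption that $\mG$ and $\mH$ have full row rank — justified by \cref{lem:equiv-rank} and invoked in the proof of the main theorem — is essential, and I should state it explicitly at the start of the proof. A secondary point of care is making sure the block decomposition of $\mM'^{-1}$ really is available (that inverting a block-monomial matrix preserves the block structure), which follows immediately from \cref{cor:M'-structure} since $\mM' = \operatorname{diag}(\mM_1,\mM_2,\mM_3)$ and $\mM'^{-1} = \operatorname{diag}(\mM_1^{-1},\mM_2^{-1},\mM_3^{-1})$.
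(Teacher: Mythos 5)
Your proposal is correct and follows essentially the same route as the paper: part (i) comes from the all-$\vece_{k+1}$ block $\mG'_3$ forcing $\mS'[:,k+1]$ to be a scalar multiple of $\vece_{k+1}$, part (ii) comes from the zero last row of the $\mG'_2/\mH'_2$ block together with the full-row-rank assumption (your ``columns of $\widehat{\mG}$ span $\F_q^k$'' is just the dual phrasing of the paper's ``the first $k$ rows of $\mG'_2\mM_2$ are linearly independent''), and part (iii) follows from invertibility. The detours through $\mS'^{-1}$ are harmless but unnecessary, since the direct arguments you give at the end are exactly the paper's.
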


\begin{figure}[H]
    \centering
    \begin{tikzpicture}
        \draw[-] (0,0) -- (3,0) node[right] {};
        \draw[-] (0,3) -- (3,3) node[right] {};
        \draw[-] (0,0) -- (0,3) node[right] {};
        \draw[-] (3,0) -- (3,3) node[right] {};
        \coordinate (w1) at (2.5,3);
        \coordinate (h1) at (0,0.5);
        \coordinate (o2) at (2.5,0.5);
        \draw[-] (w1) -- (o2) node[right] {};
        \draw[-] (h1) -- (o2) node[right] {};
        \coordinate (w2) at (3,0.5);
        \coordinate (h2) at (2.5,0);
        \draw[-] (o2) -- (w2) node[right] {};
        \draw[-] (o2) -- (h2) node[right] {};
        \node[label={$\mS$}] at (1.25,1.25) {};
        \node[label={$a$}] at (2.75,-0.1) {};
    \end{tikzpicture}
    \caption{The structure of matrix $\mS'$.}
    \label{fig:S'-structure}
\end{figure}

\begin{proof}
    By definition, $\mS'$ has an inverse $\mS'^{-1} \in \GL_{k+1}$, so we can write $\mG' \mM' = \mS'^{-1} \mH'$.
    By~\cref{rem:S-bijection}, $\mS'^{-1}$ maps identical columns in $\mH'$ to identical columns in $\mS'^{-1}\mH'$.
    To prove each part of the claim, we analyze the effect of $\mS'^{-1}$ on each block $\mH'_2$ and $\mH'_3$ in the right-hand side of the equation.

    By~\cref{cor:M'-structure}, we know that $\mM'$ is comprised of three block matrices $\mM_1 \in \cM_n$, $\mM_2 \in \cM_{nm}$, and $\mM_3 \in \cM_{nm+1}$ that affect the first $n$, next $nm$, and last $nm+1$ columns of $\mG'$, respectively.
    This allows us to write $\mG'_2 \mM_2 = \mS'^{-1} \mH'_2$ and $\mG'_3 \mM_3 = \mS'^{-1} \mH'_3$.

    For part (i), consider the equation $\mG'_3 \mM_3 = \mS'^{-1} \mH'_3$.
    By construction, all columns of $\mG'_3$ and $\mH'_3$ are identical and equal to $\vece_{k+1}$.
    Since $\mM_3$ permutes all columns of $\mG'_3$ and multiplies them by a non-zero scalar, all columns of $\mG'_3 \mM_3$ must be of the form $a\cdot \vece_{k+1}$ for some non-zero $a$.
    By~\cref{rem:S-bijection}, all columns of $\mS'^{-1} \mH'_3$ are identical.
    Then for any column $a\cdot \vece_{k+1}$ of $\mG'_3 \mM_3$, we have $\mS'(a\cdot \vece_{k+1}) = \vece_{k+1}$.
    If any of the first $k$ entries of the last column of $\mS'$ is non-zero, then the corresponding entry in $\mS'(a\cdot \vece_{k+1}) = \vece_{k+1}$ must be non-zero, but this is not the case.
    Therefore, the last column of $\mS'$ must contain zeros in the first $k$ entries.

    For part (ii), consider the equation $\mG'_2 \mM_2 = \mS'^{-1} \mH'_2$.
    By construction, the last rows of $\mG'_2$ and $\mH'_2$ only contains zeros.
    By~\cref{lem:P'-structure}, no columns in $\mG'_2 \mM_2$ could have been mapped from outside $\mG'_2$, so the last row of $\mG'_2 \mM_2$ must only contain zeros.    
    Since $\mG$ is a submatrix of $\mG'_2$ with full row rank $k$, the first $k$ rows of $\mG'_2$, denoted by $\widehat{\mG}$ in~\cref{con:our-construction}, form a submatrix of rank $k$.
    Then $\widehat{\mG}$ also has column rank $k$, and because the last row of $\mG'_2$ only contains zeros, $\mG'_2$ must contain $k$ linearly independent columns.
    By~\cref{cor:M'-structure}, we know that $\mG'_2 \mM_2$ is entirely comprised of columns of $\mG'_2$ multiplied by a non-zero scalar, so $\mG'_2 \mM_2$ has column rank $k$.
    Then the first $k$ rows of $\mG'_2 \mM_2$ form a submatrix of rank $k$. 
    By definition, the last row of $\mS'$ contains the coefficients that specify the linear combination of rows of $\mG'_2 \mM_2$ which gives the last row of $\mH_2$.
    But since the first $k$ rows of $\mG'_2 \mM_2$ are linearly independent, no linear combination of these can produce the all-zero last row of $\mH_2$.
    Therefore, the first $k$ entries in the last row of $\mS'$ must be zero.

    Part (iii) follows immediately from parts (i) and (ii) and the fact that $\mS'$ is invertible.
\end{proof}

Finally, we show how~\cref{lem:P'-structure,lem:S'-structure} combine to ensure the existence of an unsigned permutation for which $(\mG, \mH)$ is in \PCE.

\begin{corollary}
    \label{cor:M1-unsigned-permutation}
    Let $\mM' \in \cM_n$ be a permutation with block submatrices $\mM_1, \mM_2, \mM_3$ as in~\cref{cor:M'-structure}.
    Then, $\mM_1 = a \cdot \mP$ for some permutation $\mP \in \cP_n$ and non-zero scalar $a$.
\end{corollary}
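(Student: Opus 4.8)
The plan is to feed the two structural facts already established — \cref{lem:S'-structure} about $\mS'$ and \cref{cor:M'-structure} about $\mM'$ — into the equation $\mS'\mG'\mM' = \mH'$, restricted to the first block of columns.

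First I would record that \cref{lem:S'-structure} forces $\mS'$ to be block diagonal. Setting $c := \mS'[k+1,k+1]$, part (iii) gives $c \neq 0$, and parts (i)--(ii) say the last column and last row of $\mS'$ are zero in their first $k$ entries, so $\mS' = \bigl[\begin{smallmatrix}\mS & \veczero \\ \veczero^{T} & c\end{smallmatrix}\bigr]$ for some $\mS \in \GL_k$ (invertibility of $\mS$ follows from $\det \mS' = c\det\mS \neq 0$). Then \cref{cor:M'-structure} says $\mM'$ is block diagonal with blocks $\mM_1 \in \cM_n$, $\mM_2 \in \cM_{nm}$, $\mM_3 \in \cM_{nm+1}$ acting on the first $n$, middle $nm$, and last $nm+1$ columns, so that $\mG'\mM' = [\,\mG_1'\mM_1 \mid \mG_2'\mM_2 \mid \mG_3'\mM_3\,]$; comparing the first block of $\mS'\mG'\mM' = \mH'$ then gives $\mS'\mG_1'\mM_1 = \mH_1'$. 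By \cref{con:our-construction}, $\mG_1'$ is $\mG$ with an all-ones row appended and $\mH_1'$ is $\mH$ with an all-ones row appended; write $\mathbf{1}^{T}$ for that all-ones row of length $n$.

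Next I would write $\mM_1 = \mD_1\mP$ with $\mD_1 = \mathrm{diag}(d_1,\dots,d_n)$, each $d_i \in \F_q^{*}$, and $\mP \in \cP_n$, and expand using the block form of $\mS'$:
\[
\mS'\mG_1'\mM_1 = \begin{bmatrix}\mS\mG\mD_1\mP \\ c\,\mathbf{1}^{T}\mD_1\mP\end{bmatrix} = \begin{bmatrix}\mH \\ \mathbf{1}^{T}\end{bmatrix}.
\]
Reading off the last row gives $c\,\mathbf{1}^{T}\mD_1\mP = \mathbf{1}^{T}$. The key observation is that $\mathbf{1}^{T}\mD_1 = (d_1,\dots,d_n)$, and right-multiplication by the permutation matrix $\mP$ only rearranges these coordinates, so the entries of $c\,\mathbf{1}^{T}\mD_1\mP$ are $c\,d_{\pi(1)},\dots,c\,d_{\pi(n)}$ for some permutation $\pi$ of $[n]$; since each of these must equal $1$ and $\pi$ is a bijection, $d_i = c^{-1}$ for every $i \in [n]$. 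Hence $\mD_1 = c^{-1}\mI_n$ and $\mM_1 = c^{-1}\mP$, which is the claim with $a := c^{-1}$.

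I do not expect a genuine obstacle: once \cref{lem:S'-structure} and \cref{cor:M'-structure} are in hand, everything above is a short direct computation. The one step that needs a moment's care is the last-row argument — recognizing that the entries of $\mathbf{1}^{T}\mD_1\mP$ are precisely a reordering of $d_1,\dots,d_n$, so that forcing them all to equal $1$ simultaneously pins $\mD_1$ down as a scalar matrix. (As a sanity check, the top $k$ rows of the same equation read $\mS\mG(c^{-1}\mP) = \mH$, i.e.\ $(c^{-1}\mS)\,\mG\,\mP = \mH$ with $c^{-1}\mS \in \GL_k$ and $\mP \in \cP_n$, which is exactly $(\mG,\mH)\in\PCE$; this is how the corollary will be used to finish the reverse direction of \cref{thm:main-PCE->LCE}.)
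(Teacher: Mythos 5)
Your proof is correct and follows essentially the same route as the paper: both restrict the equation to the first block of columns and use the all-ones last row of $\mG_1'$ and $\mH_1'$, together with the structure of $\mS'$ from \cref{lem:S'-structure}, to force every diagonal scalar of $\mM_1$ to coincide. The only (cosmetic) difference is that you work with $\mS'\mG_1'\mM_1 = \mH_1'$ and explicitly identify $a = c^{-1}$, whereas the paper argues via $\mG_1'\mM_1 = \mS'^{-1}\mH_1'$ and leaves $a$ implicit.
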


\begin{proof}
    By~\cref{cor:M'-structure}, we can write $\mG'_1 \mM_1 = \mS'^{-1} \mH'_1$.
    By~\cref{lem:S'-structure} and~\cref{rem:S-bijection}, the last row of $\mS'^{-1} \mH'_1$, and hence the last row of $\mG'_1 \mM_1$, must be of the form $(a,a,\dots,a)$ for some non-zero scalar $a$.
    By construction, the last row of $\mG'_1$ contains only ones. 
    Since $\mM_1$ acts on $\mG'_1$ by permuting and scaling the columns of $\mG'_1$, and since the last row of $\mG'_1 \mM_1$ contains identical entries $a$, we infer that $\mM_1$ must multiply each column by the same scalar $a$.
    Therefore, $\mM_1 = a\cdot \mP$ for some unsigned permutation $\mP \in \cP_n$ and non-zero scalar $a$.
\end{proof}

Finally, we use the structure of the permutation and change-of-basis matrix for any \LCE matrices $\mG'$ and $\mH'$ to show that the original matrices $\mG$ and $\mH$ must be in \PCE.

\begin{corollary}
    \label{cor:LCE=>PCE}
    For any $\mG, \mH \in \F_q^{k\times n}$, 
    if $(\mG', \mH')$ is in \emph{\LCE}, then $(\mG, \mH)$ is in \emph{\PCE}.
\end{corollary}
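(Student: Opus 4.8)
The plan is to assemble the three structural results above (\cref{cor:M'-structure,lem:S'-structure,cor:M1-unsigned-permutation}) into the desired reduction, essentially by multiplying out the \LCE\ equation restricted to the first block. Suppose $\mG'$ and $\mH'$ are in \LCE, so there exist $\mS' \in \GL_{k+1}$ and a monomial matrix $\mM' = \mD' \mP' \in \cM_{n'}$ with $\mS' \mG' \mM' = \mH'$. First I would invoke \cref{lem:S'-structure}: parts (i)--(iii) say precisely that $\mS'$ is block diagonal, with an upper-left $k \times k$ block $\mS$, zeros in the rest of the last row and last column, and a nonzero scalar $c = \mS'[k+1,k+1]$ in the bottom-right corner. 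Since $\mS'$ is invertible, $\mS \in \GL_k$.

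Next I would use \cref{cor:M'-structure} to split $\mM'$ into block monomial matrices $\mM_1, \mM_2, \mM_3$ acting respectively on the first $n$, next $nm$, and last $nm+1$ columns of $\mG'$, and restrict the \LCE\ equation to the first block. Writing $\mG'_1$ for $\mG$ with an all-ones row appended below it (the matrix $\mA_1'$ of \cref{con:our-construction} with $\mA = \mG$), and similarly $\mH'_1$ for $\mH$ with an all-ones row appended, this restriction reads $\mS' \mG'_1 \mM_1 = \mH'_1$. By \cref{cor:M1-unsigned-permutation}, $\mM_1 = a \mP$ for some $\mP \in \cP_n$ and nonzero scalar $a$.

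Finally, I would simply compute the left-hand side using the block forms. Since $\mS'$ is block diagonal and a permutation fixes the all-ones row vector, one gets
\begin{equation*}
\mS' \mG'_1 \mM_1
= \begin{pmatrix} \mS & \veczero \\ \veczero^{\top} & c \end{pmatrix}\begin{pmatrix} \mG \\ \mathbf{1}^{\top} \end{pmatrix} (a\mP)
= \begin{pmatrix} a\,\mS\mG\mP \\ ac\,\mathbf{1}^{\top}\mP \end{pmatrix}
= \begin{pmatrix} a\,\mS\mG\mP \\ ac\,\mathbf{1}^{\top} \end{pmatrix}.
\end{equation*}
Equating this with $\mH'_1 = \left(\begin{smallmatrix} \mH \\ \mathbf{1}^{\top} \end{smallmatrix}\right)$ forces $ac = 1$ from the last row and $a\,\mS\mG\mP = \mH$ from the first $k$ rows. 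As $a \neq 0$ and $\mS \in \GL_k$, the matrix $a\mS$ lies in $\GL_k$, and $\mP \in \cP_n$, so $(a\mS)\,\mG\,\mP = \mH$ exhibits $(\mG,\mH) \in \PCE$, as required.

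\textbf{On the main obstacle.} There is no substantial obstacle remaining at this point: the conceptual work was carried out in \cref{lem:P'-structure}, \cref{lem:S'-structure}, and \cref{cor:M1-unsigned-permutation}, and what is left is bookkeeping. The only points requiring a moment of care are keeping the block decomposition of $\mM'$ consistent with that of $\mS'$, and the observation $\mathbf{1}^{\top}\mP = \mathbf{1}^{\top}$, which is what lets the appended all-ones rows cancel cleanly. I would also remark that the identical argument proves the \SPCE\ analog needed for \cref{thm:main-PCE->SPCE}: if $\mM' \in \SP_{n'}$, then the same reasoning (with the last row of $\mG'_1\mM_1$ forced to be constant) gives $\mM_1 = a\mP$ with $a \in \{-1,+1\}$, and the scalar $a\mS$ is again an invertible $k\times k$ matrix, so one still concludes $(\mG,\mH)\in\PCE$.
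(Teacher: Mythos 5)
Your proposal is correct and follows essentially the same route as the paper: it assembles \cref{cor:M'-structure}, \cref{lem:S'-structure}, and \cref{cor:M1-unsigned-permutation} and reads off $(a\mS)\mG\mP = \mH$ from the first block of the \LCE\ equation. Your version is slightly more explicit than the paper's (you carry out the block multiplication and note $\mathbf{1}^{\top}\mP = \mathbf{1}^{\top}$, which also yields the consistency condition $ac=1$), but the argument is the same.
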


\begin{proof}
    If $(\mG', \mH')$ is in \LCE, then there exists a monomial matrix $\mM' \in \cM_{n'}$ and an invertible matrix $\mS' \in \GL_{k+1}$ such that $\mS' \mG' \mM' = \mH'$.

    By~\cref{cor:M'-structure}, we know that $\mM'$ is comprised of three block matrices $\mM_1 \in \cM_n$, $\mM_2 \in \cM_{nm}$, and $\mM_3 \in \cM_{nm+1}$ which act exclusively on the first $n$, next $nm$, and last $nm+1$ columns of $\mG'$, respectively.
    By~\cref{cor:M1-unsigned-permutation}, we know that $\mM_1 = a \cdot \mP$ for some unsigned permutation $\mP \in \cP_n$ and non-zero scalar $a$.

    By~\cref{lem:S'-structure}, the last row and last column of $\mS'$ contain only zeros, except in the last entry.
    Let $\mS \in \GL_k$ denote the top-left block submatrix of $\mS'$ consisting of the intersection of the first $k$ rows and $k$ columns (see~\cref{fig:S'-structure}).
    Since $\mS'$ must have a non-zero determinant, this implies that $\mS$ must be invertible.
    Then since $a$ is non-zero, $a \cdot \mS$ is also an invertible matrix.
    
    By construction of $\mG'$ and $\mH'$, \cref{lem:S'-structure}, and~\cref{cor:M'-structure}, we have the block matrix product $\mS \mG \mM_1 = \mH$.
    Then for the matrices $(a \cdot \mS)$ and $\mP$ we obtain
    \begin{equation*}
        (a \cdot \mS) \mG \mP = \mS\ \mG\ (a \cdot \mP) = \mS \mG \mM_1 = \mH.
    \end{equation*}
    Therefore, $(\mG, \mH)$ is in \PCE.
\end{proof}

For the reverse direction of~\cref{thm:main-PCE->SPCE}, the proof is nearly identical to the one for \LCE, but with the assumption that all non-zero scalars are restricted to the set of signs $\{-1,+1\}$.

\section{Conclusion}
\label{sec:conclusion}

Our results imply that the \CE variants described above are nearly computationally equivalent, from the perspective of polynomial-time algorithms. 
In particular, we have shown that \LCE and \SPCE are at least as hard as \PCE.
So, in order to study the hardness of the former two problems, it suffices to focus on the hardness of \PCE.
We now have more reductions among the \CE variants, and as a consequence of~\cite{BW24}, from these variants to \LIP, yet it remains an open problem to reduce \LIP to any \CE problem.
While there has been some small progress in this direction, namely Ducas and Gibbons' Turing reduction from \LIP for Construction A lattices to \SPCE for codes with trivial hull~\cite{DG23}, no other progress has yet been made.

Another interesting open problem is improving the runtime of the reduction from \LCE to \PCE given in~\cite{SS13a} from $\poly(n, q)$ to $\poly(n, \log q)$. 
This reduction requires time polynomial in the alphabet size $q$ because the closure (as described in~\cref{sec:intro}) increases the blocklength of the codes from $n$ to $(q-1)n$. 
Any reduction running in $\log q$ time would either need to use a new type of closure that requires only a $\log q$ increase in blocklength, or a more efficient way to transform scalar multiplication operations into permutations.

\section{Acknowledgments}
\label{sec:acknowledge}

The authors would like to thank Huck Bennett, Chris Peikert, and Yi Tang for helpful conversations during the course of this work.
This research was partially supported by the National Science Foundation under Grant No.\ CCF-2236931.

\newpage

\bibliographystyle{alpha}
\bibliography{references.bib}

\end{document}